\newtheorem{clm}{Claim}
\newtheorem{cor}{Corollary}
\begin{document}
%
%
%

\title{Space Lower Bounds for Graph Stream Problems}

\author{Paritosh Verma \inst{1}}

\institute{Birla Institute of Technology \& Science, Pilani, India \footnote[1]{This work was done as a part of author's undergradute thesis}, \email{paritoshverma97@gmail.com}}
\maketitle

\begin{abstract}
	This work concerns with proving space lower bounds for graph problems in the streaming model. It is known that computing the length of shortest path between two nodes in the streaming model requires $\Omega(n)$ space, where $n$ is the number of nodes in the graph. We study the problem of finding the depth of a given node in a rooted tree in the streaming model. For this problem we prove a tight single pass space lower bound and a multipass space lower bound. As this is a special case of computing shortest paths on graphs, the above lower bounds also apply to the shortest path problem in the streaming model. The results are obtained by using known communication complexity lower bounds or by constructing hard instances for the problem.

Additionally, we apply the techniques used in proving the above lower bound results to prove space lower bounds (single and multipass) for other graph problems like finding min $s-t$ cut, detecting negative weight cycle and finding whether two nodes lie in the same strongly connected component.
\end{abstract}

\section{Introduction}
Streaming model is a computation model in which data arrives in the form of a stream and limited memory is available to the algorithm for storing data and performing computation. Due to this memory contraint, unlike the more common Random Access Memory (RAM) model, random access of input is not allowed in the streaming model. 

This study concerns with space lower bounds for streaming problems which take graphs as input. Graph streaming algorithms find their use in applications where the size of input graph is too large to be stored in a single machine or where the input data naturally arrives in an order for instance, network packets arriving in a router. The settings where the input graph is dynamic can also be modeled using streaming algorithms by considering each update in the graph as a new data element arriving in the stream. Further, studying graph streaming algorithms also yields insights into complexity of stream computation ~\cite{mcg_survy}.

Communication complexity is a concept from information theory that is used as a tool in proving many space lower bound results in the streaming model ~\cite{rgarden_cc,amitc_sa,gurus_super_lb}. Such lower bound proofs curcially rely on the idea that the computation happening as part of a streaming algorithm can be viewed as communication happening between various (hypothetical) agents who have different chunks of input data with them. This connection between communication complexity and streaming computation is explained in more detail in the subsequent sections.

Communication problems like index problem, set disjointness problem, pointer chasing have been used to prove many known lower bounds for various streaming problems ~\cite{rgarden_cc,amitc_sa,gurus_super_lb}. In this study we use known communication complexity lower bound results to prove space lower bounds for graph streaming problems. However some of our lower bound results are based on a technique that does not rely on communication complexity.

\textbf{Our contributions. } We consider the following graph problems and prove space lower bounds for them in the streaming model-

\noindent
(i) We prove a $\Omega(n.\log{n})$ single pass and a $\Omega((n/p^7) - \log(n/p))$ $p$-pass space lower bound for the problem of finding the depth of a given node in a rooted tree.

\noindent
(ii) A $\Omega(n^2)$ single pass lower bound for the problem of computing min $s-t$ cut and the problem of detecting negative weight cycle in a weighted graph.

\noindent
(iii) A $p$-pass space lower bound of $\Omega(n^{1+ \Theta(1/p)}/poly(p))$ for the problem of finding min $s-t$ cut, finding whether two nodes lie in the same strongly connected component and the problem of detecting a negative weight cycle in a weighted graph.

In section 2, we define all the communication problems and their known lower bound results that are used in this work. Most of our multipass space lower bound proofs use the communication complexity lower bound results proved in ~\cite{gurus_super_lb}.\\ In section 3, we prove single and multi pass space lower bounds for the problem of finding depth of a node in a tree being streamed. The motivation to study this problem is that it is a simpler version of the general shortest path problem in graphs and study of this problem can yield insights into the latter problem. We prove a $\Omega(n.\log{n})$ lower bound for the single pass version of the problem which also applies to the shortest path problem. \\ 
In section 4 and 5, we use the ideas used in the above proofs to obtain single and multi pass lower bounds for the problem of finding min $s$-$t$ cuts in a graph and detecting negative weight cycles.

\subsection{Streaming Model and Communication Complexity}
In the streaming model the data is presented in the form of a stream i.e.\ data arrives in an order and random access on the input is not permitted. The space available for the algorithm is also limited. This model is useful in modelling scenarios in big data processing and cloud computing.\\
A $p$-pass (or multi pass) streaming algorithm refers to an algorithm to which input is streamed $p$ times (or many number of times). If the input is streamed only once, the streaming algorithm is called single pass.\\\\
Communication complexity is a concept from Information theory that has been used to prove space lower bounds for many problems in streaming model.\\ Communication complexity of a function $f$ is defined as the worst case communication (over all inputs) required by the best communication protocol for the following communication problem - the input of the function $f$ is divided amongst different entities which can communicate only through a channel and the goal is to compute the value of the function $f$ using minimum communication ~\cite{rgarden_cc,amitc_sa}. Next subsection describes a specific communication complexity model.


\subsubsection{Yao's communication model}

	This model consists of two players Alice and Bob (can be $n$ players in the general case). Alice and bob can communicate to each other via a channel. Alice has a binary string $x \in \{0,1\}^n$ and Bob has a binary string $y \in \{0,1\}^n$ such that both the players are unaware of the other person's string.\\
	
	Both of them are interested in computing the value $f(x,y)$ where $f$ is function of strings of both the players ~\cite{rgarden_cc}. Both of them can apriori agree on a communication protocol which they will use in order to compute the value $f(x,y)$. A trivial protocol could be that Alice sends her input $x$ to Bob via the communication channel and Bob upon receiving $x$ computes $f(x,y)$ which he then passes on to Alice. Both the players Alice and Bob are assumed to be computationally unbounded.\\
	Communication complexity of a function $f(x,y)$, $CC(f)$  is the minimum amount of bits required to be transferred through the channel by any communication protocol (for computing $f(x,y)$) in the worst case. The communication complexity of a function is in general difficult to compute because the first quantifier in it's definition is over all possible protocols.\\

  \[CC(f) = \min_{\forall \text{ protocols } P} \max_{\forall x,y } \text{(bits communicated to compute }f(x,y)\text{ using }P)\]
	\\
Most space lower bound proof involving communication complexity use the following idea:
\begin{itemize}
	\item For a given streaming problem $\mathcal{S}$ identify a underlying communication problem $\mathcal{C}$ i.e.\ a communication problem that can be reduced to the given streaming problem. Which means that using an algorithm $\mathcal{A}$ for $\mathcal{S}$ one should be able to construct a protocol for the communication problem $\mathcal{C}$.
	\item Prove a communication complexity lower bound for the communication problem $\mathcal{C}$.
	\item Translate the communication complexity lower bound for $\mathcal{C}$ to space lower bound for streaming problem $\mathcal{P}$. This step is based on the construction of the reduction.
\end{itemize}
The above idea is central to most space lower bounds results for streaming algorithms ~\cite{amitc_sa,rgarden_cc,nisan_rounds_in_cc,gurus_super_lb,zelk_mincut}.  
The following communication problems and their known communication complexity lower bounds are used in this work.


\subsection{Index problem}
In this problem there are two players Alice and Bob. Alice has an array $A$, $A\in \{ 0,1 \} ^n$ and Bob has $i \in [n]$  (where $[n]$ represents $\{0,1\}^n$). Both Alice and Bob are not aware of the other player's input and one way communication from Alice to Bob is allowed through a channel. However, Bob is not allowed to communicate to Alice. Bob wants to find out the value stored at the $i^{th}$ position of Alice's string, $A_i$. It is known that the one way communication complexity of the index problem is $ \Omega(n)$.
~\cite{rgarden_cc}.

\subsubsection{Pointer and Set Chasing}
A $(p,r)$-communication problem is defined as a communication problem which consists of $p$ players $P_1, P_2 \dots P_p$. Players communicate for $r$ rounds and are constrained to speak in the order $P_1 \rightarrow P_2 \rightarrow \dots$ to $P_p \rightarrow P_1 \rightarrow P_2 $ and so on. In the last round the player $P_p$ has to output the required value to be computed.\\
If $f:[n]\rightarrow 2^{[n]}$ be a function mapping the set $[n] = \{1, 2, 3 \dots n\}$ to $2^{[n]}$ (power set of $[n]$), then a function $f^\prime: 2^{[n]}\rightarrow 2^{[n]}$ can be defined using $f$ as follows ~\cite{gurus_super_lb}-
\[f^\prime(S) = \bigcup_{i \in s} f(i) \]
\subsubsection{Pointer Chasing}
Pointer chasing problem $PC_{n,p}$ for positive integers $n$ and $p$, is defined as a $(p,p-1)$-communication problem where $\forall i \in [p]$ player $P_i$ has a function $f_i:[n]\rightarrow [n]$. They are interested in computing the value $f_1(f_2(f_3(\dots f_p(1))))$~\cite{nisan_rounds_in_cc,gurus_super_lb}.

\begin{theorem}
	Any randomized communication protocol that solves the pointer chasing problem $PC_{n,p}$ with error probability at most $1/10$ must require at least $\Omega(n/p^4 - p^2\log(n))$ bits of communication. ~\cite{nisan_rounds_in_cc} 
\end{theorem}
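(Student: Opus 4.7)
The plan is to attack this via a round-elimination argument on a carefully chosen hard input distribution. I would take the distribution $\mu$ under which each function $f_i:[n]\to[n]$ (for $i\in[p]$) is drawn independently and uniformly at random from all $n^n$ such functions. Under $\mu$ the true answer $f_1(f_2(\dots f_p(1)))$ is essentially a uniformly random element of $[n]$, so any protocol that outputs the correct answer with probability $9/10$ must convey $\Omega(\log n)$ bits of information about the deep composition to the last speaker $P_p$.

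The first step is a round-elimination lemma of the following shape: given a $(p,p{-}1)$-protocol $\Pi$ for $PC_{n,p}$ with total communication $c$ and error $\epsilon$ under $\mu$, pick the round whose expected message length is at most $c/(p{-}1)$. The players speak in the cyclic order $P_1\to P_2\to\dots\to P_p\to P_1\to\dots$, but the ``natural'' evaluation order starts at $f_p$, so the first speaker of that round has essentially no information about the relevant pointer yet. By an averaging argument one can condition on a typical value of this message, losing only a $2^{c/(p-1)}$ factor of probability mass, and recover a $(p{-}1,p{-}2)$-protocol on an embedded sub-instance of size $\Omega(n/\mathrm{poly}(p))$ whose error has grown by at most an additive $O(1/p)$ (or a similarly small quantity measured via total-variation or information-complexity tools). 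The $-p^2\log(n)$ slack in the final bound will come precisely from the log-labelling overhead accumulated in these conditioning steps.

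Iterating the lemma $p{-}1$ times reduces the original task to a $1$-round (or $0$-round) problem on an instance of size $\Omega(n/\mathrm{poly}(p))$ whose error is still bounded away from $1/2$. Such a residual problem trivially requires communication linear in the residual instance size, and balancing the polynomial loss per round (roughly a factor of $p^3$ or $p^4$ depending on how the embedding is set up) against the $p{-}1$ iterations yields $c=\Omega(n/p^4 - p^2\log n)$. A randomized protocol of error $1/10$ on worst-case inputs implies the same bound on $\mu$ by Yao's minimax principle, so the lower bound transfers back to the randomized setting.

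The main obstacle, and the technical heart of the Nisan--Wigderson argument, is the round-elimination lemma itself: one must verify that conditioning on a fixed short first message keeps the conditional distribution on the remaining $p-1$ functions close (in total variation or in mutual information) to a fresh product distribution on which a smaller pointer-chasing instance is embedded. The natural tools here are the information-complexity / direct-sum framework and explicit coordinate-renaming that embeds $PC_{n',p-1}$ with $n'=\Omega(n/\mathrm{poly}(p))$ into the conditional sub-space. Controlling the blow-up in error across $p-1$ applications so that it scales as $O(1/p)$ per round rather than as a constant (which would be fatal) is what forces the specific $1/p^4$ exponent and the $p^2\log n$ additive correction.
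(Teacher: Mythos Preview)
The paper does not actually prove this theorem: it is quoted verbatim from the cited reference \texttt{nisan\_rounds\_in\_cc} and used thereafter as a black box in the proof of Claim~1. There is therefore no ``paper's own proof'' to compare your proposal against.

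That said, your sketch is a reasonable high-level outline of the round-elimination strategy that the cited Nisan--Wigderson result does follow: a hard product distribution, an averaging step that fixes a short message from the ``wrong'' first speaker, and iteration down to a trivial base case. Two cautions if you intend to flesh this out. First, the precise loss per elimination step and the way the $p^4$ denominator and the $p^2\log n$ correction arise are sensitive to the exact embedding and to how one controls the error growth; your narrative gestures at ``$O(1/p)$ per round'' and ``a factor of $p^3$ or $p^4$'' without committing to a mechanism, and that is exactly where the real work lies. Second, the communication model here is the $(p,p{-}1)$ one-way model with a fixed speaking order, so the ``pick the round with short expected length'' move has to respect that the first full round is already determined; the Nisan--Wigderson argument handles this by eliminating the first speaker (who knows nothing useful yet) rather than an arbitrary low-communication round. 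If you tighten those two points you will essentially be reproducing the cited proof, not anything the present paper supplies.
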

\subsubsection{Set Chasing Intersection problem}
The set chasing problem $SC_{n,p}$ is defined similarly, as a $(p,p-1)$-communication problem where the $i^{th}$ player $P_i$ has a function $f_i:[n]\rightarrow 2^{[n]}$ and they are interested in computing the value ${f^\prime}_1({f^\prime}_2({f^\prime}_3 \dots {f^\prime}_p(\{1\})))$.\\


For pointer chasing and set chasing problem become easy once the number of rounds are increased from $p-1$ to $p$ or if the order of communication is inverted.\\
The set chasing intersection problem $INTERSECT(SC_{n,p})$ is a $(2p,p-1)$-communication problem in which the first $p$ players have one instance of the set chasing problem and the other $p$ players have another instance of the set chasing problem. In all the $p-1$ rounds they can communicate only in the order $P_1 \rightarrow P_2 \rightarrow P_3 \rightarrow P_4 \dots \rightarrow P_{2p} \rightarrow P_1$ and so on. Finally the goal is to check whether ${f^\prime}_1({f^\prime}_2({f^\prime}_3 \dots {f^\prime}_p(\{1\}))) \cap {f^\prime}_{p+1}({f^\prime}_{p+2}({f^\prime}_{p+3} \dots {f^\prime}_{2p}(\{1\}))) = \phi? $, in other words they are interested in knowing whether the output of the two set chasing instances intersect or not~\cite{gurus_super_lb}.\\
The following communication complexity lower bound on the set intersection problem has been proved-
\begin{theorem}
	For some positive constant $p$ such that $1 < p \leq \log(n)/(\log(\log(n)))$, any randomized communication protocol that solves the $INTERSECT(SC_{n,p})$ problem with a probability greater than $9/10$ requires $\Omega \big(n^{1+1/2(p+1)}/(p^{16}.\log^{3/2}n) \big)$ ~\cite{gurus_super_lb}.  
\end{theorem}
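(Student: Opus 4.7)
The plan is to establish this bound via an information-complexity argument that lifts the pointer-chasing lower bound of Nisan--Wigderson (already cited above) to the set-valued, intersection variant. The overall template is standard in this area: fix a hard product distribution $\mu$ on inputs $(f_1,\dots,f_{2p})$, define the information cost $\mathrm{IC}_\mu(\Pi)$ of a protocol $\Pi$ under $\mu$, show that any protocol solving $INTERSECT(SC_{n,p})$ with error at most $1/10$ has $\mathrm{IC}_\mu(\Pi)$ at least the claimed quantity, and then conclude via $\mathrm{CC}(\Pi) \geq \mathrm{IC}_\mu(\Pi)$.

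First I would design $\mu$ so that each $f_i(j)$ is an independent uniformly random subset of $[n]$ of a carefully tuned size $s$. Choosing $s \approx n^{1/(2(p+1))}$ makes the two forward images $T_L = f'_1(f'_2(\dots f'_p(\{1\})))$ and $T_R = f'_{p+1}(\dots f'_{2p}(\{1\}))$ concentrate around size $\Theta(\sqrt{n})$ each, so that the YES and NO cases of $INTERSECT(SC_{n,p})$ both occur with constant probability and the problem is genuinely balanced. The hypothesis $p \leq \log n / \log\log n$ is exactly what keeps $s$ polylogarithmic or larger, the regime where the concentration bounds for $|T_L|$, $|T_R|$, and $|T_L \cap T_R|$ are sharp enough for the subsequent information-theoretic inequalities.

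Next I would carry out a round-elimination / direct-sum argument across the $p-1$ rounds. The cyclic speaking order $P_1 \to P_2 \to \cdots \to P_{2p} \to P_1$ forces each interior player to speak only once within a block, allowing the standard entropy-chain-rule bookkeeping of pointer chasing. Tracking a uniformly random leaf of the \emph{critical} set-image tree and reasoning as in the Guruswami--Onak set-chasing proof, one shows that each round of the transcript reduces uncertainty about this leaf by at most $s / \mathrm{poly}(p, \log n)$ bits. Summing the per-round information gain across all roughly $s^{p}$ leaves and the two symmetric instances then yields the claimed $\Omega(n^{1+1/(2(p+1))}/(p^{16} \log^{3/2} n))$ cost.

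The main technical obstacle is what distinguishes this problem from pure pointer chasing, namely two features that force much of the slack in the final bound: the set-valued branching (each $f_i$ maps into $2^{[n]}$, so a naive reduction to a scalar pointer problem would lose an exponential-in-$p$ factor) and the intersection structure (which a priori lets a clever protocol short-circuit either instance based on partial information). I would handle the first with a Hellinger-distance chain-rule bound for independent subset queries, and the second by a symmetrisation step showing that under $\mu$ the two set-chasing instances are i.i.d., so any protocol solving $INTERSECT(SC_{n,p})$ must in fact solve at least one of the two $SC_{n,p}$ copies to constant error. The $p^{16}$ and $\log^{3/2} n$ factors in the denominator are not artefacts of the statement but emerge inevitably from iterating Pinsker's inequality and a Cauchy--Schwarz union bound across all $2p$ players and $p-1$ rounds; absorbing them cleanly is the price paid for the direct-sum-across-branches step, and I expect most of the calculation effort to concentrate there.
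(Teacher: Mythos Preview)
The paper does not prove this theorem. It is stated as a known result and attributed to~\cite{gurus_super_lb} (Guruswami--Onak); the paper merely quotes the bound and then invokes it as a black box in the reductions of Sections~4 and~5. Consequently there is no ``paper's own proof'' against which to compare your proposal.

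That said, your sketch is a reasonable high-level caricature of the Guruswami--Onak argument: a hard product distribution with branching factor tuned so that the two forward images have size $\Theta(\sqrt{n})$, an information-cost / round-elimination analysis in the spirit of Nisan--Wigderson pointer chasing, and a direct-sum step across the branches. Two cautions if you intend to flesh this out. First, your symmetrisation claim---that solving $INTERSECT(SC_{n,p})$ forces solving one of the $SC_{n,p}$ copies to constant error---is not how the actual proof proceeds and is not obviously true; the intersection output is a single bit, whereas each $SC_{n,p}$ output is a set, so a protocol could in principle decide intersection without recovering either set. The real argument works directly with the intersection predicate and uses a disjointness-style embedding at the final layer rather than a reduction to a single copy. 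Second, the polynomial-in-$p$ and polylog losses do arise from iterated applications of information-theoretic inequalities, but the precise exponent $p^{16}$ is specific to the bookkeeping in~\cite{gurus_super_lb}; you should not expect your outline, as written, to reproduce that constant without tracking their analysis closely.
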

Using this result, space lower bounds for multi pass streaming algorithm for the problems like finding perfect matching, checking if there exists a directed path between two vertices has been proved in~\cite{gurus_super_lb}.

\section{Finding Depth of a node in a Tree}
The problem that is considered here is is the following- Let $\mathcal{T}$ be a rooted tree whose root is denoted by a known symbol $r$ and $u$ is some node in the tree. Given a stream $\sigma$ consisting of the node $u$ followed by edges of the tree $\mathcal{T}$, the problem is to compute the depth of the node $u$ in $\mathcal{T}$.\\
This problem is a simpler version of the general problem of finding distance between two nodes in a graph.
\subsection{Multipass Lower Bound}
Using the communication complexity lower bound of the pointer chasing problem we prove the following multi pass space lower bound for the mentioned problem.

\begin{clm}
	Any randomized $p$-pass streaming algorithm that computes the depth of a given node in a rooted tree with error probability at most $1/10$ must require at least $\Omega((n/p^7) - \log(n/p))$ space. 
\end{clm}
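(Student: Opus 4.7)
The plan is to reduce the pointer chasing problem $PC_{m,p}$ with $m = \Theta(n/p)$ to the streaming tree-depth problem, so that a $p$-pass algorithm using space $s$ is forced to carry out roughly $p^2$ inter-player messages of size $s$. Matching $s p^{2} = \Omega(m/p^{4} - p^{2}\log m) = \Omega(n/p^{5} - p^{2}\log(n/p))$ against the pointer chasing lower bound will yield the claimed $s = \Omega(n/p^{7} - \log(n/p))$.

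To embed a $PC_{m,p}$ instance $f_{1},\dots,f_{p}:[m]\to[m]$ in a rooted tree on $\Theta(n)$ nodes, I would take a root $r = w_{0}$ heading a chain $w_{0} - w_{1} - \cdots - w_{m}$, so $w_{j}$ has depth $j$, together with $p$ extra levels of $m$ nodes $\{v_{i,j}\}$. Declare the parent of $v_{1,j}$ to be $w_{f_{1}(j)}$ and of $v_{i,j}$ for $i\ge 2$ to be $v_{i-1,\,f_{i}(j)}$. Walking up from $v_{p,1}$ traces $v_{p,1} \to v_{p-1,\,f_{p}(1)} \to v_{p-2,\,f_{p-1}(f_{p}(1))} \to \cdots \to v_{1,\,f_{2}(\cdots f_{p}(1))} \to w_{f_{1}(f_{2}(\cdots f_{p}(1)))}$, so $v_{p,1}$ sits at depth $p + f_{1}(f_{2}(\cdots f_{p}(1)))$. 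Announcing $v_{p,1}$ as the query node and then streaming the (data-independent) chain edges followed by the $f_{i}$-edges in the order $i = 1, 2, \dots, p$ reduces recovering the pointer chasing answer to learning the depth of $v_{p,1}$.

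For the simulation, let player $P_{i}$ of the $PC_{m,p}$ instance own exactly the $f_{i}$-block of the stream. A $p$-pass streaming algorithm is then executed by sweeping through the players in the enforced cyclic order $P_{1} \to P_{2} \to \cdots \to P_{p} \to P_{1} \to \cdots$, where each boundary transmits the $s$-bit memory state of the algorithm; over $p$ passes this totals $p(p-1) + (p-1) = \Theta(p^{2})$ messages of $O(s)$ bits each, and the output in the final pass is the depth of $v_{p,1}$ and hence the pointer chasing answer. Substituting $m = n/p$ into the Nisan--Wigderson bound closes the calculation.

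The main obstacle I anticipate is the round budget: $PC_{m,p}$ is nominally a $(p, p-1)$-protocol, whereas the simulation produces $\Theta(p^{2})$ messages in the same cyclic order. I would need to invoke the form of the Nisan--Wigderson lower bound that remains $\Omega(m/p^{4} - p^{2}\log m)$ when the number of messages in the forced cyclic order is allowed to grow polynomially in $p$ --- morally this is what the stated theorem encodes once one absorbs the additive $p^{2}\log n$ slack --- after which the remaining steps are routine bookkeeping on the tree construction and the arithmetic above.
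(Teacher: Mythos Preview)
Your tree construction and the overall reduction scheme are essentially the same as the paper's: a chain hanging from the root to give the ``target'' layer distinct depths, followed by $p$ (or $p+1$) bipartite layers encoding the pointer functions, with the players simulating the streaming algorithm by passing the $s$-bit memory state along the cyclic order. The paper phrases the chain as ``blue edges'' and proves separately that the resulting graph is a tree, but the content is the same.

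There is, however, a real gap, and it is exactly the one you flagged but then resolved incorrectly. Your simulation with $p$ players and $p$ passes yields a protocol that runs for $p$ full rounds in the cyclic order $P_1\to\cdots\to P_p\to P_1\to\cdots$. The Nisan--Wigderson bound for $PC_{m,p}$ is for $p-1$ rounds; with $p$ rounds the problem is \emph{trivial} (player $P_p$ sends $f_p(1)$, which propagates around so that by the time $P_p$ speaks in round $p$ the answer is known using $O(p^2\log m)$ bits total). So there is no ``form of the lower bound that remains $\Omega(m/p^4-p^2\log m)$'' under one extra round --- the bound collapses entirely. Your worry about the \emph{total number of messages} being $\Theta(p^2)$ is a red herring: $p-1$ rounds with $p$ players already entails $\Theta(p^2)$ messages, and it is the round count, not the message count, that the hardness is sensitive to.

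The fix is the one the paper uses: reduce from $PC_{m,\,p+1}$ rather than $PC_{m,p}$. Add one more layer $\{v_{p+1,j}\}$ to your tree (with parent of $v_{p+1,j}$ equal to $v_{p,\,f_{p+1}(j)}$), set the query node to $v_{p+1,1}$, and let player $P_i$ for $i\in[p+1]$ own the $f_i$-block. Now a $p$-pass simulation gives exactly a $p$-round protocol for a $(p+1,p)$-communication problem, which is the hard regime. The tree has $\Theta((p+1)m)$ nodes, so $m=\Theta(n/p)$ still holds, the total communication is $s\cdot\Theta(p^2)$, and Theorem~1 applied to $PC_{m,\,p+1}$ yields
\[
s\cdot\Theta(p^2)\;=\;\Omega\!\Big(\frac{m}{(p+1)^4}-(p+1)^2\log m\Big)\;=\;\Omega\!\Big(\frac{n}{p^5}-p^2\log(n/p)\Big),
\]
hence $s=\Omega(n/p^7-\log(n/p))$ as claimed. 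After this one-line adjustment your argument goes through and matches the paper's.
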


\begin{proof}
	Given a $p$-pass streaming algorithm $A$ for finding the depth of a node in a rooted tree that uses $s$ bits of space we design a communication protocol that solves the pointer chasing problem $PC_{n,p+1}$ using $s.\Theta(p^2)$ bits of communication (through channel between players).\\
	    The construction of the communication protocol is based on the idea of visualizing the computation performed in the pointer chasing problem as a graph. The function $f_i$ of each player $P_i$, $\forall i \in [p+1]$ can be visualized as bipartite graphs. As a result, the composition of the functions $f_i$ can be viewed as side by side concatenation of these bipartite graphs as shown in figure 1. In this view, the goal of the pointer chasing problem is to find the node to which the bold edges emerging from the node $n_1$ lead. The node to which the bold edges lead corresponds to the value $f_1(f_2(f_3(\dots f_{p+1}(1))))$.\\ 
	    
		Given the algorithm $A$ in each of the $p$ rounds all the $p+1$ players will stream the edges of the bipartite graph (corresponding to their function $f_i$) to $A$ and then they will pass the memory transcript of $A$ to the next player according to the communication order defined in pointer chasing problem. At the end of each round player $P_{p+1}$ will stream the edges shown in blue (see figure 1) in addition to the edges corresponding to $f_{p+1}$. All the edges that are streamed to $A$ can be viewed as a tree rooted at node $N$ (see lemma 1 below).
			This means that after $p$ round of communication, the value of $f_1(f_2(f_3(\dots f_{p+1}(1))))$ can be found out by finding the depth of the node $n_1$ (see figure 3) in the computation graph/tree. Let $d(v)$ denote the depth of a node $v$ and $x$ be the node corresponding to .$f_1(f_2(f_3(\dots f_{p+1}(1))))$. Then, $d(x) = d(n_1) - 1 -p$ . The depth of node $n_1$ is the output algorithm $A$ gives at the end of the last round. Hence by knowing the value of $d(n_1)$ we can determined $d(x)$ thereby determining the node $x$ itself. This is because each of the rightmost node (figure 1) has different depth with respect to root $N$.\\
	The communication complexity of this protocol is $s.\Theta(p^2)$ as the size of the memory transcript is $s$ and order of the total messages sent between players is $\Theta(p^2)$ ($\Theta(p)$ messages in $p$ rounds). According to the communication complexity lower bound of the pointer chasing problem
					\[s.\Theta(p^2) = \Omega \big(\frac{(n/p)}{p^4} - p^2\log(n/p)\big)\]
						\[s = \Omega\big(n/p^7 - \log(n/p)\big)\]
		Where $n$ is the total number of nodes in the computation graph and $n/p$ represents the domain of functions $f_i$.
\end{proof}

\begin{figure}
	\centering
	\includegraphics[width = 10cm, height=7.5cm]{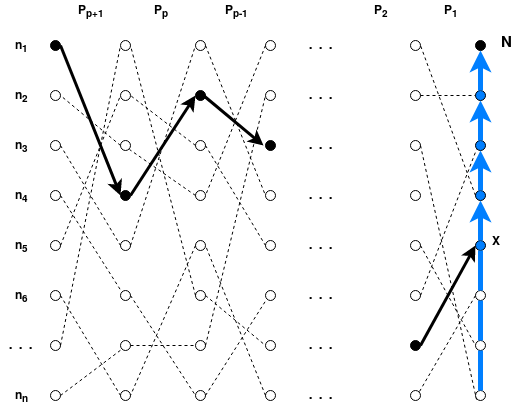}
	\caption{Computation graph $G$ for $PC_{n,p+1}$, with blue edges added for node depth lower bound}
\end{figure}

\begin{lemma}
	The computation graph $G$ for the pointer chasing problem is a tree.
\end{lemma}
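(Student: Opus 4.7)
The plan is to view $G$ as a directed graph in which every non-root vertex has a unique parent pointing toward $N$, and then to verify by a short counting argument that the resulting structure is a tree. First I would set up the combinatorial description of $G$: partition $V(G) = \{N\} \cup C_0 \cup C_1 \cup \cdots \cup C_{p+1}$, where each column $C_i$ is a copy of $[n]$ identifying one ``side'' of a bipartite graph for some $f_j$; the black (function) edges between $C_i$ and $C_{i+1}$ are exactly $\{(v, g_i(v)) : v \in C_i\}$, with $(g_0,\ldots,g_p)=(f_{p+1},\ldots,f_1)$. Reading figure 1, the blue edges streamed by $P_{p+1}$ sit on $\{N\} \cup C_{p+1}$ and, to deliver the ``different depths'' property the main proof invokes, arrange $\{N\} \cup C_{p+1}$ into a path rooted at $N$.

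Next I would assign a parent to every non-$N$ vertex: for $v \in C_i$ with $i \le p$, let its parent be $g_i(v) \in C_{i+1}$, which is unique because $g_i$ is a function; for $v \in C_{p+1}$, let its parent be the unique blue neighbour one step closer to $N$ along the blue path. Iterating the parent map strictly advances the column index until $C_{p+1}$ is reached, and then strictly decreases the distance to $N$ along the blue path, so after at most $p+1+n$ steps every vertex reaches $N$ along a well-defined walk. In particular $G$ is connected, and each vertex has a well-defined depth.

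The tree property is then sealed by a double count: $|V(G)| = 1 + (p+2)n$, and the edge total is $(p+1)n$ function edges plus $n$ blue edges, giving $(p+2)n = |V(G)| - 1$; together with the connectedness just established, this forces $G$ to be a tree. The only step that requires care --- and the main obstacle in the argument --- is verifying against figure 1 the precise shape of the blue subgraph on $\{N\} \cup C_{p+1}$: it must be a spanning tree of those $n+1$ vertices (indeed a path), for only then does each vertex of $C_{p+1}$ end up with exactly one parent and a pairwise distinct depth. Once that is confirmed, everything else is routine bookkeeping.
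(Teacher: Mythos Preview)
Your argument is correct but takes a genuinely different route from the paper. The paper argues acyclicity by contradiction: assume a cycle $\mathcal{C}$, note it cannot consist entirely of blue edges, pick the minimal index $i'$ with an $f_{i'}$-edge in $\mathcal{C}$, and observe that the cycle must then enter and leave a single vertex of the domain column of $f_{i'}$ via two distinct $f_{i'}$-edges --- impossible since $f_{i'}$ is a function. Your proof instead equips every non-root vertex with a unique parent, iterates the parent map to establish connectedness, and then closes with the edge count $|E|=(p+1)n+n=(p+2)n=|V|-1$.

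What each buys: the paper's argument is short and isolates the single structural reason cycles are impossible (out-degree one at each domain vertex), but as written it only proves acyclicity and leaves connectedness implicit. Your counting approach handles connectedness explicitly and, as a by-product, actually supplies the depth function that the surrounding proof of Claim~1 needs when it asserts that the rightmost nodes have pairwise distinct depths. Both arguments rest on the same unstated fact about figure~1 --- that the blue edges form a path (hence a spanning tree) on $\{N\}\cup C_{p+1}$ --- which you correctly flag as the one point requiring verification against the figure; the paper uses the same fact when it says ``the cycle cannot be completely composed of blue edges''.
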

\begin{proof}
	Let's assume that $G$ has a cycle $\mathcal{C}$. Each edge of the cycle C can be classified into either a blue edge of an edge corresponding to some function $f_i$, $\forall i \in [p+1]$. Also, the cycle cannot be completely composed of blue edges. Due to this, we can find the minimum $i$, such that edge corresponding to $f_i$ is in $\mathcal{C}$. Call this $i^{\prime}$. Since $\mathcal{C}$ is a cycle, there will be two edges corresponding to function $f_{i^{\prime}}$, which is a contradiction since $f_{i^{\prime}}$ is a function. Hence no cycle can exist.
\end{proof}
This result also shows that any one pass streaming algorithm must require at least $\Omega(n)$ space to compute the depth of a given node in a tree being streamed (for $p=1$). In the next section we prove a stronger $\Omega(n.\log(n))$ lower bound for the single pass version of the problem.
\subsection{Stronger bound for the single pass algorithms}
	In this section we prove a stronger space lower bound for the single pass version of the problem. The given lower bound proof does not follow the standard reduction procedure that is used to prove most space lower bound results in streaming model.
\begin{clm}
	Any one pass streaming algorithm that computes the depth of a given node in a rooted tree must require $\Omega(n.log(n))$ space.
\end{clm}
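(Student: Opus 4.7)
The plan is to bypass the usual communication-complexity reduction and prove this bound by a direct reachable-states argument. I will exhibit a family of $(n-2)!$ stream prefixes $\{P_\pi\}_\pi$ --- all beginning with the same query vertex $u$, all of the same length --- which are pairwise distinguishable in the sense that no correct algorithm can be in the same memory state after two different prefixes. Since a state uses $s$ bits, this immediately forces $s \ge \log_2((n-2)!) = \Omega(n \log n)$.

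\emph{Hard instances.} Take vertices $\{r, v_1, \ldots, v_{n-2}, u\}$ with known root $r$. For each permutation $\pi$ of $[n-2]$ let $P_\pi$ be the $n-2$ edges of the path $r \to v_{\pi(1)} \to v_{\pi(2)} \to \cdots \to v_{\pi(n-2)}$, presented in a fixed canonical order. For each $j \in [n-2]$ let the suffix $S_j$ consist of the single edge $(v_j, u)$. The stream $(u, P_\pi, S_j)$ is a legal input for the problem: the tree it encodes is the above path with $u$ hanging as a child of $v_j$, and the correct depth of $u$ is therefore $\pi^{-1}(j) + 1$.

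\emph{Distinguishability and conclusion.} For any $\pi \neq \pi'$ pick some $j$ with $\pi^{-1}(j) \neq \pi'^{-1}(j)$. The two full streams $(u, P_\pi, S_j)$ and $(u, P_{\pi'}, S_j)$ share a common suffix $S_j$ but demand different outputs. Hence the algorithm's memory state after the prefix $(u, P_\pi)$ must differ from its state after $(u, P_{\pi'})$; otherwise the deterministic continuation on the shared suffix would produce identical outputs on two instances whose correct answers differ. Thus all $(n-2)!$ prefixes lead to pairwise-distinct reachable states, and $s \geq \log_2((n-2)!) = \Omega(n \log n)$.

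\emph{Main obstacle.} The deterministic case is essentially immediate once the right family of hard prefixes is in hand --- the key trick is to \emph{defer} the attachment point of $u$ to the suffix $S_j$, which effectively converts the single fixed depth query into a query about any chosen vertex of the path. The main additional work, if one wants the bound to hold against randomized algorithms with constant error, is to replace state-counting with an information-theoretic argument: viewing $\Pi$ as a uniformly random permutation of $[n-2]$ and $M$ as the (random) memory state after the prefix $(u, P_\Pi)$, one must show $I(\Pi; M) = \Omega(\log((n-2)!))$, which follows from Fano together with a direct-sum / chain-rule argument in the spirit of the large-alphabet Index lower bound applied coordinate-wise to $\Pi^{-1}$.
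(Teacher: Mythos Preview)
Your proof is correct and shares the paper's central idea: defer the edge attaching the query node $u$ to the very end of the stream, so that the memory state after the prefix must implicitly encode the depth of every other vertex. The paper phrases this as an encoding argument --- from the memory transcript one can recover the depth of each $w\in V\setminus\{u\}$ by appending the edge $(u,w)$, hence the memory must have enough bits to distinguish all achievable depth functions, which the paper then counts via Stirling numbers of the second kind. You instead phrase it as a direct fooling-set argument restricted to \emph{path} instances indexed by permutations. Your restriction to paths makes the counting immediate ($\log_2((n-2)!)$ in place of a sum over Stirling numbers) and the distinguishability step crisp, since distinct permutations are separated at some coordinate $j$. The paper's version ranges over arbitrary tree shapes, but that extra generality does not yield a stronger bound, so your argument is the more elementary of the two. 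Your closing remark on extending to randomized algorithms via an Index-style mutual-information argument is also appropriate; the paper's proof, as written, addresses only the deterministic case.
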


\begin{proof}
		To establish this result we first choose a particular input instance and then we prove a lower bound on the space required by the best algorithm for that instance, this value according to yao's minimax principle serves as a space lower bound for the problem.\\
			The input instance we consider is the instance in which the node $v$ (whose depth is to be calculated) is a leaf node and the edge connecting $v$ to its parent in the tree $(v,p(v))$ arrives as the last edge of the stream. The idea here is that the essential information required to compute the depth of the node $v$ i.e.\ the location of the node $v$ in the tree is deferred in the input stream.\\
				Let $A$ be some algorithm that computes the depth of a given node. Now consider the memory transcript $\mathcal{M}$ of the algorithm $A$ when it has processed all the edges except the last edge $(v,p(v))$ of the constructed input instance. Now we claim that $|\mathcal{M}| = \Omega(n.\log(n))$, proving this is sufficient to prove the claim.\\
					To prove $|\mathcal{M}| = \Omega(n.\log(n))$ we argue that given the memory transcript $\mathcal{M}$ and the algorithm $A$ one can recover the depths of all nodes in $V\setminus \{v\}$ using $2|\mathcal{M}|$ space. Given $\mathcal{M}$ and $A$ one can run the algorithm $n-1$ times, each time continuing the computation on $\mathcal{M}$ and streaming the edge $(v,u)$ $\forall u \in V\setminus\{v\}$. From all of these runs we can recover the depths of all the nodes in the set $V\setminus\{v\}$, as depth of node $u$ is one less than depth of node $v$ and $u \in V\setminus\{v\}$. To complete the proof we show that the depth information of all the nodes of a tree requires $\Theta(n.\log(n))$ bits to store which implies that $|\mathcal{M}| = \Omega(n.\log(n))$.\\
						Let $\mathcal{D}$ denote the number of different functions $d:V\rightarrow \mathbb{Z}^{+}$ such that $d(v)$ represents the depth of the node $u \in V\setminus\{v\}$. \\
							Let $\mathcal{D}_i$ denote the number of different functions $d:V\rightarrow \mathbb{Z}^{+}$ such that $d(v)$ represents the depth of the node and $\max_{v \in V}d(v)  = i$ and $S(n,k)$ is stirling number of second kind.
							
								\[|\mathcal{M}| = \Omega(\log_2(\mathcal{D}))\]
									\[|\mathcal{M}| = \Omega(\log_2(\sum_{i \in [n-2]} \mathcal{D}_i))\]
										\[|\mathcal{M}| = \Omega(\log_2(\sum_{i \in [n-2]} n.S(n-1,i)))\] 
											\[|\mathcal{M}| = \Omega(\log_2(n.\sum_{i \in [n-2]} i^{n-i}))\] 
												\[|\mathcal{M}| = \Omega(\log_2(n.(n/2)^{n/2})) \]     
													\[|\mathcal{M}| = \Omega(n.\log_2(n))\]
													
\end{proof}
Note that, the lower bound proved above is tight because by using $\Theta(n.\log_2(n))$ space we can essentially store the entire tree in the working memory.

\begin{cor}
	Any streaming one pass algorithm that computes the distance between two nodes in a graph requires at least $\Omega(n.\log_2(n))$ space.
\end{cor}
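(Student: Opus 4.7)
The plan is to prove the corollary by a direct reduction from the tree-depth problem (for which Claim 2 already gives an $\Omega(n \log n)$ single-pass lower bound) to the graph-distance problem. The key observation is that a rooted tree $\mathcal{T}$ with root $r$ is a graph, and for any node $u$, the depth of $u$ in $\mathcal{T}$ is exactly the length of the unique path from $r$ to $u$ in $\mathcal{T}$, i.e.\ the distance between $r$ and $u$ when $\mathcal{T}$ is viewed as an undirected graph. So if we had an algorithm for graph distance, we could invoke it on the pair $(r,u)$ together with the tree edges.

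First, I would fix the input convention for the distance problem: the stream presents two query nodes $s,t$ followed by the graph edges. Given any instance of the tree-depth problem, which is a stream of the form $\sigma = u, e_1, e_2, \dots, e_{n-1}$ where $r$ is the known root symbol, I would produce the distance instance $\sigma' = u, r, e_1, e_2, \dots, e_{n-1}$. The transformation adds a single constant-sized symbol at the front of the stream and therefore can be simulated on the fly by an algorithm that maintains a one-symbol buffer before passing the stream to a graph-distance algorithm $A$. The simulation uses $s(A) + O(\log n)$ bits of space.

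Next, I would argue correctness: since $\mathcal{T}$ is a tree, the distance from $r$ to $u$ computed by $A$ equals $\mathrm{depth}_{\mathcal{T}}(u)$ exactly, so the reduction returns a correct answer for the tree-depth problem. Combining this with Claim 2, any one-pass streaming algorithm for graph distance must use $\Omega(n \log n)$ bits of space, which is the statement of the corollary.

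There is no substantial obstacle here: the reduction is immediate once one notes that trees are a subclass of graphs and that the depth of a node equals its tree-distance from the root. The only minor point worth stating carefully is that the known root symbol $r$ can be injected into the stream using $O(\log n)$ extra bits, so the reduction is space-preserving up to lower-order terms and the $\Omega(n \log n)$ bound carries over.
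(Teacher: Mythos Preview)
Your proposal is correct and follows the same approach the paper takes: the corollary is stated immediately after Claim~2 with no separate proof, the implicit argument being exactly your observation that a rooted tree is a graph and the depth of a node equals its distance to the root, so the $\Omega(n\log n)$ lower bound from Claim~2 carries over directly. Your write-up simply makes this reduction explicit (including the harmless $O(\log n)$ overhead for inserting the root symbol into the stream), which the paper leaves unstated.
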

This above corollary is applicable even for the shortest path problem, that is any streaming algorithm that finds the distance between two given vertices of a graph must require $\Omega(n.\log{n})$ space.\\
Using the same idea of deferring the essential input in the stream, we prove a space lower bound for the problem of min s-t cut in weighted graph.

\section{Min $s-t$ Cut Problem}
To prove this lower bound we use the following known lower bound for the unweighted min cut problem.\\
\subsection{Single Pass Lower Bound}
\begin{theorem}
	Any one pass streaming algorithm that computes the min cut of a unweighted graph requires at least $\Omega(n^2)$ memory. ~\cite{zelk_mincut}
\end{theorem}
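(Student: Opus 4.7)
The plan is to reduce the one-way communication problem \textsc{Index} (Alice holds a bitstring of length $N$, Bob holds an index $k \in [N]$ and must recover the $k$-th bit; the known one-way communication lower bound is $\Omega(N)$) to the streaming min-cut problem with $N = \Theta(n^2)$. Given any one-pass streaming min-cut algorithm using $s$ bits of memory, Alice streams her portion of the input, ships the memory state ($s$ bits) to Bob, who then finishes the stream and reads off the answer. Instantiating this reduction with $N = \Theta(n^2)$ forces $s = \Omega(n^2)$.

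The crux is the graph encoding. I would take vertex sets $L = \{l_1,\dots,l_n\}$ and $R = \{r_1,\dots,r_n\}$ and interpret Alice's $n^2$ bits $x_{ij}$ as a bipartite adjacency matrix: include edge $(l_i, r_j)$ iff $x_{ij} = 1$. Alice streams exactly these edges. Bob, knowing $(i^*, j^*)$, appends a forcing gadget with a new source $s$ and sink $t$: attach $s$ to each $l_i$ with $i \ne i^*$ by a bundle of many internally vertex-disjoint short paths through fresh auxiliary vertices (simulating a high-capacity edge in a simple unweighted graph), and attach $s$ to $l_{i^*}$ by a single light path; mirror this construction on the right with $t$ and $r_{j^*}$. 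With the bundle sizes calibrated correctly, the cheapest $s$-$t$ cut must pass through the small neighborhoods of $l_{i^*}$ and $r_{j^*}$, and its value is a fixed constant that shifts by one precisely according to whether edge $(l_{i^*}, r_{j^*})$ is present. Bob then recovers $x_{i^*j^*}$ from the returned min-cut value.

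The main obstacle is engineering this gadget inside a simple unweighted graph: because weighted or parallel edges are unavailable, every "high capacity" connection must be realized by a family of internally vertex-disjoint paths. I would need to simultaneously (i) keep the total vertex count $O(n)$ so that the lower bound comes out as $\Omega(n^2)$ and not something weaker, and (ii) verify that the auxiliary vertices do not inadvertently open up cheap alternative cuts that bypass the queried edge and destroy the one-bit distinguishability. Once such a gadget is in place, composing it with the hypothetical streaming algorithm yields the $\Omega(n^2)$ lower bound via \textsc{Index}; if the intended statement is about \emph{global} min cut rather than $s$-$t$ min cut, the same scheme works after tweaking the gadget so that the unique small cut isolates a single distinguished vertex whose isolation cost is controlled by $x_{i^*j^*}$.
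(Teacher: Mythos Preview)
The paper does not prove this theorem. It is quoted as a known result with the citation to~\cite{zelk_mincut} and is then used as a black box in the proof of Claim~3, where the weighted min $s$--$t$ cut problem is reduced to the unweighted global min-cut problem. There is therefore no proof in the paper for your proposal to be compared against.

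As for the proposal itself, the high-level plan---reduce from \textsc{Index} on $\Theta(n^2)$ bits and read the queried bit off the min-cut value---is the standard route for such results. But the two obstacles you flag are genuine and you have not resolved them. On the vertex budget: to dominate the cuts near $l_{i^*}$ and $r_{j^*}$ (whose bipartite degree can be as large as $n$), each of your $\Theta(n)$ ``high-capacity'' bundles would need $\Omega(n)$ internally disjoint paths, costing $\Theta(n^2)$ fresh vertices and collapsing the conclusion to $\Omega(n)$. On distinguishability: in your gadget the cheap $s$--$t$ cuts are governed by the full neighbourhoods of $l_{i^*}$ and $r_{j^*}$, not just by the single edge $(l_{i^*},r_{j^*})$, so the returned value does not shift by exactly one as you claim. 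If the intended model allows parallel edges (as is common in the min-cut literature), both issues evaporate because Bob can simply add multiplicities; for simple graphs a different encoding is needed, and your sketch does not supply one.
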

The input for this problem is a weighted graph stream i.e. the individual tokens of the stream are weighted edges of the form $((u,v),w)$ where $w$ is the weight of the edge $(u,v)$.
\begin{clm}
	Any one pass streaming algorithm that computes the min $s$-$t$ cut value for given pair of nodes $s$ and $t$ in a weighted graph must require at least $\Omega(n^2)$ space. 
\end{clm}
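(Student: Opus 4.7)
The plan is to reduce the unweighted global min-cut problem, whose single-pass space complexity is $\Omega(n^2)$ by Theorem~3, to the weighted min $s$-$t$ cut problem. Given any unweighted graph $G=(V,E)$ on $n-2$ vertices, I would construct a family of weighted graphs $G'_{S,T}$ on $n$ vertices, indexed by bipartitions $(S,T)$ of $V$ with $S\cup T=V$: every edge of $G$ is included with weight $1$, every $v\in S$ is connected to a new source vertex $s$ by an edge of weight $W:=n^{2}+1$, and every $v\in T$ is connected to a new sink $t$ by an edge of weight $W$. Since $W$ exceeds the total weight of all the unit edges, no minimum $s$-$t$ cut of $G'_{S,T}$ can afford to sever any heavy edge; hence its source side must contain $S$ and its sink side must contain $T$, and so the min $s$-$t$ cut value of $G'_{S,T}$ coincides with the value of the bipartition cut $(S,T)$ in $G$. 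Taking the minimum over all nontrivial $(S,T)$ therefore recovers the global min cut of $G$.

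Given a one-pass streaming algorithm $A$ for weighted min $s$-$t$ cut on $n$-vertex graphs using $M$ bits of memory, I would use it to solve unweighted min cut on $(n-2)$-vertex graphs in a single pass with $O(M)$ space as follows. Stream the edges of $G$ through $A$ as weight-$1$ edges and save the resulting memory snapshot $\mathcal{M}$. Then, for each nontrivial bipartition $(S,T)$ of $V$, restart $A$ from $\mathcal{M}$, append the appropriate heavy $s$- and $t$-edges to the stream, let $A$ produce its output, and keep a running minimum. The original stream is read exactly once; only $\mathcal{M}$, a single mutable copy of it, and an $O(n)$-bit bipartition index are retained, for total space $O(M)$. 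Combining with Theorem~3 gives $M=\Omega((n-2)^{2})=\Omega(n^{2})$.

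The only step needing a real check is the forcing argument: if a cut of $G'_{S,T}$ severs any heavy edge its weight is at least $W>\binom{n}{2}$, whereas the cut that places $S\cup\{s\}$ on one side and $T\cup\{t\}$ on the other has weight at most $\binom{n}{2}$, so no heavy edge lies in any min $s$-$t$ cut of $G'_{S,T}$. The main conceptual subtlety — and the place where the reduction could look suspicious at first glance — is that although the enumeration over exponentially many bipartitions is time-expensive, it happens entirely after the input stream has been consumed and requires no additional working memory beyond $\mathcal{M}$, so it remains a legitimate single-pass, space-$O(M)$ algorithm for unweighted min cut. The weights themselves are polynomially bounded in $n$, so each heavy edge is encoded in $O(\log n)$ bits and does not inflate the space usage.
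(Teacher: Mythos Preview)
Your proof is correct and follows the same strategy as the paper: reduce from unweighted global min cut (Theorem~3) by attaching two fresh heavy-weight terminals and reusing the memory snapshot taken after the unit-weight edges of $G$ have been streamed. The only difference is cosmetic --- the paper attaches $s$ to a single fixed vertex $x$ and $t$ to a single varying vertex $y$ (so just $n-1$ restarts suffice, via the fact that the global min cut equals $\min_{y}\operatorname{mincut}_G(x,y)$ for any fixed $x$), whereas you attach $s$ and $t$ to the two sides of an arbitrary bipartition and enumerate all $2^{n-2}$ of them; this is exponentially more wasteful in time but equally valid for a space lower bound.
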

\begin{proof}

	Let $A$ be any streaming algorithm for the weighted min $s$-$t$ cut problem and the space required by the algorithm be $s$. Now we use this algorithm $A$ to construct a streaming algorithm $A_s$ for solving the unweighted min cut problem that uses $\Theta(s)$ space. Since $A_s$ requires $\Theta(s)$ space, using theorem 3 implies that $s = \Omega(n^2)$, which completes the proof.\\ 
	Let $G=(V,E)$ be the graph whose min cut is to be calculated, then we construct another weighted graph $G^\prime = (V\cup\{u,v\},E\cup \{(u,x),(v,y)\})$, for $x,y \in V$ such that weight of the edges $(u,x)$ and $(v,y)$ is $n$ and every other edge weighs one. These weights ensure that the edges $(u,x)$ and $(v,y)$ never lie in the min $u$-$v$ cut of $G^\prime$. This implies that min $u$-$v$ cut of graph $G^\prime$ is same as the min $x$-$y$ cut of graph $G$, $G$ being an unweighted graph. \\Now to compute the min cut for $G$, the instance $G^\prime$ is streamed to the algorithm $A$ multiple times, one for each $y \in V\setminus \{x\}$, for a fixed value of $x$. The graph $G^\prime$ is streamed in such a way that the edge $(v,y)$ arrives as the last edge of the stream. The min cut value of graph $G$ is calculated by taking the min value of the all the min $u$-$v$ cut values calculated for the graph $G^\prime$ (same as min $x$-$y$ cut for $G$).\\
	All the $n-1$ min cut values (corresponding to $y \in V\setminus \{x\}$) that are required to be computed can be computed using $2s$ space as the memory transcript $\mathcal{M}$ of the algorithm $A$ - after it has processed all the edges $E\cup \{(u,x)\}$, can be reused to compute all the min $u$-$v$ ($x$-$y$) cuts $\forall y \in V\setminus \{x\}$.\\ 
	To calculate the value of min $x$-$y'$ cut for some $y' \in V \setminus \{x\}$, the memory transcript $\mathcal{M}$ is used along with the algorithm $A$ and the edge $(y',v)$ is streamed to $A$, then the value computed by the algorithm $A$ is the min $u$-$v$ cut for $G^\prime$ (or min $x$-$y'$ cut for $G$). The memory transcript can be copied and used similarly $n-1$ times for each $y' \in V \setminus \{x\}$, using $2s$ space.\\
	This leads to a $2s$ space streaming algorithm that computes min cut of a unweighted graph being streamed, it implies that $s = \Omega(n^2)$. Which means that for all possible algorithms $A$, the space required must be $\Omega(n^2)$.
\end{proof}

In the next section, using the known communication complexity lower bound for the set chasing intersection problem we prove a multi pass space lower bound for the unweighted min cut problem.

\begin{figure}
	\centering
	\includegraphics[width = 12.5cm, height=6.5cm]{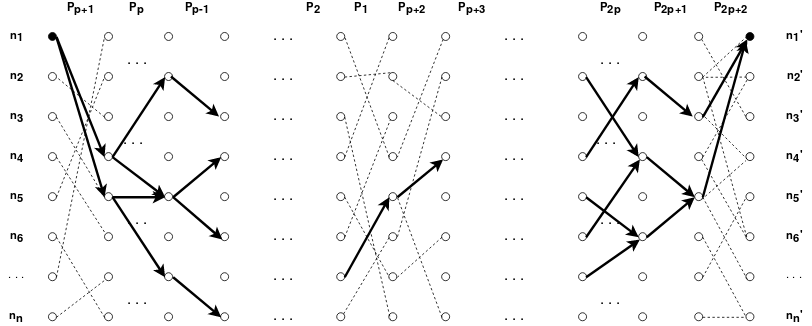}
	\caption{Computation graph for $INTERSECT(SC_{n,p+1})$, directions are used to indicated paths emerging and leading to nodes $n_1$ and $n_1^{'}$. All the edges corresponding to functions $f_{i}^{\prime}$ are not shown for clarity.}
\end{figure}

\subsection{Multi Pass Lower Bound}
\begin{clm}
	Any $p$ pass streaming algorithm that computes the min $s$-$t$ cut of a unweighted graph must require $\Omega \big(n^{1+1/2(p+1)}/(p^{19}.\log^{3/2}n) \big)$ bits of space.
\end{clm}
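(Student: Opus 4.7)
The plan is to mimic the strategy of Claim 1 (which reduced node depth from pointer chasing), replacing pointer chasing by the set chasing intersection problem of Theorem 2 and replacing ``find the depth'' by ``decide connectivity via min $s$-$t$ cut''. The essential observation is that in a directed unweighted graph the min $s$-$t$ cut equals $0$ iff there is no directed $s$-to-$t$ path, so a cut algorithm suffices to decide whether the two set-chasing instances intersect.

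The hard instance is built as the computation graph for $INTERSECT(SC_{m,\,p+1})$ depicted in Figure 2, with the function-domain size $m$ chosen so that the total vertex count is $n$; since there are $\Theta(p)$ layers of $m$ vertices, this forces $m = \Theta(n/p)$. For each function $f_i:[m]\to 2^{[m]}$ held by $P_i$, place a bipartite graph between the two adjacent layers with an edge from $j$ to $k$ iff $k \in f_i(j)$. For the first instance $f_1',\ldots,f_{p+1}'$ all edges point ``forward'' from the layer containing $\{1\}$ toward the shared output layer, and $n_1$ is joined by a directed edge to element $1$ of the starting layer; for the second instance $f_{p+2}',\ldots,f_{2p+2}'$ the bipartite edges are oriented in reverse, pointing from the shared output layer back toward the starting layer, and the edge at $n_1'$ points from element $1$ of that starting layer to $n_1'$. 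The two concatenated chains share only the output layer.

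The key structural observation is that a directed $n_1$-to-$n_1'$ path exists iff the two set chasing outputs intersect: forward traversal from $n_1$ reaches exactly the elements of the first output set $S_1 = f_1'(\cdots f_{p+1}'(\{1\}))$ in the output layer, and continuing from any $c$ in the output layer along the reversed second-instance edges reaches $n_1'$ iff $c \in S_2 = f_{p+2}'(\cdots f_{2p+2}'(\{1\}))$ (reverse traversal through the reversed chain is in bijection with forward set chasing in the second instance). Hence the min $s$-$t$ cut between $n_1$ and $n_1'$ is $0$ iff $S_1\cap S_2=\emptyset$. Given a $p$-pass streaming algorithm $A$ using $s$ bits, the $2(p+1)$ players simulate $A$ by streaming their respective edges in the round-robin order required by the CC protocol within each pass, forwarding the $s$-bit memory transcript after each turn; the last player then reads off the computed min $s$-$t$ cut and outputs ``intersect'' iff it is nonzero. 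The total communication is $\Theta(p^2)$ transcripts of $s$ bits, i.e.\ $s\cdot \Theta(p^2)$ bits in all, and combining this with Theorem 2 at domain size $m = \Theta(n/p)$ yields, after absorbing the lower order $p$ factors, the claimed bound $s = \Omega(n^{1+1/(2(p+1))}/(p^{19}\log^{3/2}n))$.

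The main obstacle is the structural characterization: one must verify (analogously to Lemma 1 in the tree case, by induction along the layers) that the only vertices shared between the two concatenated chains are those of the output layer and that the chosen orientations force every $n_1$-to-$n_1'$ directed path to traverse it, so that min $s$-$t$ cut faithfully detects $S_1\cap S_2 \neq \emptyset$. The rest is routine parameter accounting to pull the exact $p^{19}$ out of the substitution $m = \Theta(n/p)$ and the $\Theta(p^2)$ overhead from the round-robin simulation.
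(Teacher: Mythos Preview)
Your proposal is correct and follows essentially the same approach as the paper: reduce from $INTERSECT(SC_{m,\,p+1})$ with $m=\Theta(n/p)$ by building the layered computation graph of Figure~2, have the $2(p+1)$ players simulate the $p$-pass streaming algorithm in round-robin order with $\Theta(p^2)$ transcript exchanges of $s$ bits each, and decide intersection by testing whether the min $n_1$--$n_1'$ cut is zero. You are somewhat more explicit than the paper about the edge orientations and the fact that the two chains share only the output layer (which is what makes the ``path iff intersection'' correspondence hold), but the underlying argument and the final parameter arithmetic are identical.
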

\begin{proof}
	Suppose there exists such an algorithm $A$ that uses at most $s$ bits of memory. Then we can use this algorithm to design a protocol for the set chasing intersection problem $INTERSECT(SC_{n,p+1})$ that uses $s.\Theta(p^2)$ bits of communication.\\
		The computation performed in the set chasing intersection problem can be visualized as a computation graph, shown in figure 2. For all the $2(p+1)$ players their functions $f_{i}^{\prime}$ can be viewed as bipartite graphs. \\
		We use the algorithm $A$ to compute the min $n_1$-$n_1^{\prime}$ cut of the computation graph $G$ (shown in figure 2). This is done as follows. Each player $P_i$ uses the algorithm $A$ and streams the edges corresponding to their function $f_i$ into the algorithm (as the individual functions can be visualized as a bipartite graph) and passes the memory transcript to the next player according to the communication order constraint. In this view, any path going from node $n_1$ to $n_1^{\prime}$ would represent an element in the set ${f^\prime}_1({f^\prime}_2({f^\prime}_3 \dots {f^\prime}_{p+1}(\{1\}))) \cap {f^\prime}_{p+2}({f^\prime}_{p+3}({f^\prime}_{p+4} \dots {f^\prime}_{2p+2}(\{1\})))$.\\
			Now we can claim that by knowing the value of min $n_1$-$n_1^{\prime}$ cut we can find out whether the two instances of set chasing intersect. This is because if the two instance of set chasing do not intersect then the size of min $n_1$-$n_1^{\prime}$ cut would be zero as there is not path from the vertex $n_1$ to $n_1^{\prime}$. If on the other hand the outputs of the two set chasing instances intersect then the min cut value would be greater than zero as the intersection would yield a path from $n_1$ to $n_1^{\prime}$. Hence by checking whether the min cut value is zero or not one can solve the set chasing intersection problem. Using the communication complexity lower bound- 
				\[s.\Theta(p^2) = \Omega \Big(\frac{(n/p)^{1 + \Theta(1/p)} }{p^{16}.\log^{3/2}n}\Big) \]
					\[s = \Omega \Big(\frac{ n^{1 + \Theta (1/p) } }{p^{19}.\log^{3/2}n}\Big) \]
Where $s.\Theta(p^2)$ is the total communication required during the protocol.
\end{proof}
In the next section we study the problem of detecting negative weight cycles in a graph stream. We prove single and multipass lower bounds for the problem using reductions from index problem and set chasing problem respectively.

\section{Detecting Negative Weight Cycle}
\subsection{Single Pass Lower Bound}
Using the known communication complexity lower bound for the index problem we first prove a single pass space lower bound for detecting negative weight cycle.
\begin{clm}
	Any streaming algorithm that can detect the presence of a negative weight cycle in a weighted graph stream must use at least $\Omega(n^2)$ space. 
\end{clm}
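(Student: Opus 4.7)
The plan is to reduce the $N$-bit index problem to negative-cycle detection on a graph with $O(\sqrt{N})$ vertices, and then instantiate with $N = \Theta(n^2)$. Since the one-way communication complexity of index on $N$ bits is $\Omega(N)$, a faithful reduction to a graph on $\Theta(n)$ vertices will immediately yield the desired $\Omega(n^2)$ space bound.

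For the encoding, I would let Alice hold $A \in \{0,1\}^{n \times n}$ and Bob hold an index $(i, i') \in [n] \times [n]$. I build a directed weighted graph on $2n$ vertices, split into $U = \{u_1, \dots, u_n\}$ and $V = \{v_1, \dots, v_n\}$. For every pair $(j, k)$, Alice streams the directed edge $u_j \to v_k$ with weight $-1$ if $A_{j,k} = 1$ and weight $+1$ otherwise. After all of Alice's edges have been processed, Bob appends the single directed edge $v_{i'} \to u_i$ with weight $0$.

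The structural observation driving the reduction is that Alice's subgraph is bipartite with every edge oriented from $U$ to $V$, while Bob contributes a lone back-edge from $V$ to $U$. Consequently any directed cycle in the composed graph must traverse Bob's edge exactly once, and thus has the form $u_i \to v_{i'} \to u_i$, with total weight equal to the weight of the edge $u_i \to v_{i'}$. A negative weight cycle therefore exists if and only if $A_{i, i'} = 1$. Given any single-pass algorithm for negative-cycle detection using $s$ bits of space, Alice would run it on her edges, transmit the $s$-bit memory transcript to Bob, Bob would continue the algorithm on his single edge, and then read off the answer; this constitutes a one-way $s$-bit protocol for the index problem on $n^2$ bits, so $s = \Omega(n^2)$.

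The step I expect to require the most care is verifying that no unintended directed cycles appear in the composed graph, since spurious negative cycles would decouple the algorithm's output from the bit $A_{i, i'}$. In the present construction this is handled by the strict one-directional bipartite orientation of Alice's edges together with Bob's contribution of a single back-edge, which together force every cycle to have length two and to pass through the unique queried position. The final graph has $2n$ vertices, so the lower bound of $\Omega(n^2)$ is preserved in terms of graph size up to constants.
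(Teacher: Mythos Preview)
Your reduction from the index problem is correct and follows the same high-level strategy as the paper: encode Alice's $\Theta(n^2)$ bits as edges of a graph on $\Theta(n)$ vertices, have Bob append a small gadget, and argue that a negative cycle appears iff the queried bit is $1$.

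The concrete encodings differ. The paper keeps the graph \emph{undirected}: Alice interprets her string as the adjacency matrix of an $n$-vertex graph and streams only the present edges with weight $+1$; Bob, holding the index corresponding to the pair $(a,b)$, introduces a fresh vertex $v$ and streams $(a,v)$ and $(b,v)$ with weight $-1$. A triangle $a\text{--}b\text{--}v\text{--}a$ of weight $-1$ exists iff $(a,b)$ was present, and any longer cycle through $v$ picks up enough $+1$ edges to be nonnegative. Your version instead uses a \emph{directed} bipartite graph where the bit is encoded in the sign of the weight rather than in the presence of the edge, and Bob's single back-edge forces the unique cycle to be the length-two cycle through the queried position. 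Your cycle analysis is cleaner (there is literally one candidate cycle), whereas the paper's argument needs the extra observation that longer undirected cycles through $v$ cannot go negative. On the other hand, the paper's construction establishes the bound for undirected graphs, while yours relies on directed edges and a length-two directed cycle; if the intended model is undirected, your gadget would need to be modified (e.g., along the lines of the paper's triangle trick).
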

\begin{proof}
	Let $C$ be a streaming algorithm that detects the presence of a negative weight cycle using $s$ space. Then $C$ can be used to design a communication protocol for the index protocol as follows.\\
	Let $(A,i)$ be an instance of index problem in which $A$ is of the size $\Theta(n^2)$, then the binary string $A$ can be interpreted as a graph on $n$ vertices. Alice can stream the edges of this graph to the algorithm $C$ associating with each edge a weight of positive one unit. Then she can send the memory transcript obtained (of size $s$) to Bob. \\
	Let $(a,b)$ be the edge corresponding to Bob's input $i$. After receiving the memory transcript Bob streams the edges $(a,v)$ and $(b,v)$ to the streaming algorithm associating with each edge a weight of negative one. \\
	Bob can now find the $i^{th}$ bit of Alice's string by knowing whether a negative weight cycle is present or not. This is due to the fact that if edge $(a,b)$ is present in the graph, the vertices $a,b$ and $v$ form a cycle of weight negative one. If the edge $(a,b)$ is not present in the graph then the minimum length of the cycle containing the negative weight edges is $4$ which means that it's weight will be non negative, hence negative weight cycle will not exist. Negative weight cycle exists if and only if edge $(a,b)$ is present in the graph i.e. when $A_i = 1$.\\
	This leads to a $s$ bit one way communication protocol that solves the index problem. It means that according to the lower bound result of Index problem, $s = \Omega(n^2)$. 
\end{proof}

\begin{figure}
	\centering
	\includegraphics[width = 12.5cm, height=6.5cm]{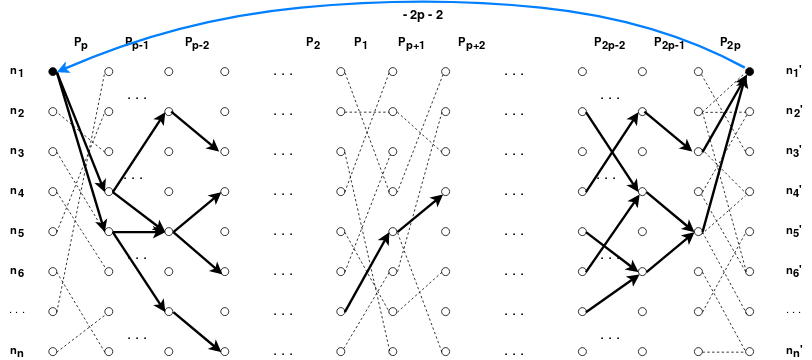}
	\caption{Computation graph for $INTERSECT(SC_{n,p+1})$, blue edge is added by the player $P_{2p}$ in every round.}
\end{figure}

\subsection{Multi Pass Lower Bound}
A multi pass lower bound for the same problem can be proved using the communication complexity lower bound for the set chasing intersection problem.\\

\begin{clm}
	 Any $p$ pass streaming algorithm that detects the presence of a negative weight cycle in a weighted graph being streamed must require at least $\Omega \big(n^{1+1/2(p+1)}/(p^{19}.\log^{3/2}n) \big)$ bits of space. 
\end{clm}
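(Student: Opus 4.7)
The plan is to adapt the multi-pass reduction used in Claim 4 to the negative weight cycle problem, reusing essentially the same computation graph for $INTERSECT(SC_{n,p+1})$. Suppose a $p$-pass algorithm $C$ for negative weight cycle detection uses at most $s$ bits of memory. I would have each of the $2(p+1)$ players simulate $C$ on the edges of their bipartite graph (associated with the function $f_i^\prime$) and pass the resulting memory transcript to the next player in the communication order prescribed by the set chasing intersection problem. To turn Figure 3's undirected graph into a directed weighted graph suited to negative cycle detection, the edges contributed by the first $p+1$ players are oriented so that any walk they induce proceeds from $n_1$ toward the shared output layer, while the edges contributed by the remaining $p+1$ players are oriented so that walks proceed from the output layer toward $n_1^\prime$. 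Every such edge is streamed with weight $+1$. In every pass, the player owning the blue edge (player $P_{2p+2}$ in the figure) additionally streams a back-edge from $n_1^\prime$ to $n_1$ with weight $-W$, where $W > 2(p+1)$ is chosen strictly greater than the length of the longest directed $n_1 \to n_1^\prime$ walk achievable through the positive edges.

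Correctness follows from the equivalence that the resulting graph contains a negative weight cycle if and only if $f_1^\prime(\ldots f_{p+1}^\prime(\{1\})) \cap f_{p+2}^\prime(\ldots f_{2p+2}^\prime(\{1\})) \neq \emptyset$. If the two set chasing instances intersect then, by the same path argument used in Claim 4, there is a directed $n_1 \to n_1^\prime$ walk of length at most $2(p+1)$ through the positive edges; concatenating it with the back-edge yields a directed cycle of weight at most $2(p+1) - W < 0$. Conversely, the bipartite layered structure of each instance ensures that the positive edges alone form a directed acyclic graph, so any cycle must traverse the back-edge, which in turn forces the existence of a directed $n_1 \to n_1^\prime$ walk using only positive edges, which only exists when the instances intersect. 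Hence after the final pass the player holding the memory transcript can solve $INTERSECT(SC_{n,p+1})$ by reading off a single output bit from $C$.

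Accounting for communication, over the $p$ passes each of the $2(p+1)$ players transmits the transcript $\Theta(1)$ times, giving a total of $s \cdot \Theta(p^2)$ bits. Applying Theorem 2 exactly as in Claim 4, with domain size $n/p$ to accommodate the $2(p+1)$ layers within an $n$-vertex budget, gives $s \cdot \Theta(p^2) = \Omega\bigl((n/p)^{1+1/2(p+1)} / (p^{16} \log^{3/2} n)\bigr)$, and solving for $s$ yields the stated lower bound after absorbing the polynomial-in-$p$ factors. The main obstacle I anticipate is the directionality bookkeeping: one must check that the proposed orientation is simultaneously consistent with the one-way communication order mandated by the set chasing intersection protocol and with the requirement that the positive subgraph remain acyclic, so that the back-edge is the unique mechanism for producing negative cycles.
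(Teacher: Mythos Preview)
Your proposal is correct and follows essentially the same approach as the paper: the same reduction from $INTERSECT(SC_{n,p+1})$ via the layered computation graph, unit weights on the function edges, a single negative back-edge between $n_1'$ and $n_1$, and the identical $s\cdot\Theta(p^2)$ communication accounting against Theorem~2. The only cosmetic differences are that the paper fixes the back-edge weight to exactly $-(2(p+1)+1)$ rather than a generic $-W$ with $W>2(p+1)$, and is less explicit than you are about orienting the edges to force the positive subgraph to be acyclic.
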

\begin{proof}
	Suppose there exists a $p$-pass streaming algorithm $A$ for the problem of detecting a negative weight cycle which uses at most $s$ bits of space.	Then such an algorithm can be used to design a protocol for the set chasing intersection problem $INTERSECT(SC_{n,p+1})$ as follows-\\
	In each of the $p$ rounds all the $2(p+1)$ players stream edges the edges corresponding to their function (as shown in figure 3) and pass the resultant memory transcript of the algorithm to the next player. The weight of one is assigned to every edge being streamed. The player $P_{2(p+1)}$ also adds the edge shown in blue (see figure 3) having weight $-2(p+1)-1$ in every round.\\
	After the completion of $p$ rounds, the set chasing problem can be solved by asking whether the graph that is streamed to algorithm $A$ has a negative weight cycle or not. This is because there exists a negative weight cycle in the graph if and only if the output sets intersect in the set chasing problem. As if the output set does not intersect, there is no path from node $n_1$ to $n_1^{\prime}$ which means no negative weight cycle. If the output sets intersect, we have a cycle of weigh $-1$, consisting of the blue edge (weight $-2(p+1)-1$) and a path of $2(p+1)$ edges (from $n_1$ to $n_1^{\prime}$) of weight $1$ each.\\
	Since the size of the memory transcript is $s$ bits and in each of the $p$ rounds the memory transcript is transferred $\Theta(p)$ times, the total communication done by the protocol is $s.\Theta(p^2)$ which according to the lower bound on communication complexity of the set chasing intersection problem means that-
					\[s.\Theta(p^2) = \Omega \Big(\frac{ (n/p)^{1 + \Theta(1/p)}}{p^{16}.\log^{3/2}n}\Big) \]
					\[s = \Omega \Big(\frac{ n^{1 + \Theta (1/p) } }{p^{19}.\log^{3/2}n}\Big) \]
								since $p^{\Theta(1/p)} = O(1)$.  
									
\end{proof}
The exact same proof can also be extended to prove same lower bound result for the problem of finding whether two nodes lie in the same strongly connected component or not in a directed graph being streamed.

\bibliographystyle{splncs04}
\bibliography{master}
\end{document}